\renewcommand{\@biblabel}[1]{\quad#1.}
\date{}
\newcommand{\junk}[1]{}
\newcommand{\fedor}{FEDoR\xspace}
\newtheorem*{rep@theorem}{\rep@title}
\newcommand{\newreptheorem}[2]{%
\newenvironment{rep#1}[1]{%
 \def\rep@title{#2 \ref{##1}}%
 \begin{rep@theorem}}%
 {\end{rep@theorem}}}
\newcommand{\E}{\mathrm{E}}
\newtheorem{theorem}{Theorem}
\newtheorem{lemma}{Lemma}
\newtheorem{defn}{Definition}
\begin{document}

\begin{flushleft}
{\Large
\textbf\newline{A Mechanism for Fair Distribution of  Resources without Payments}
}
\newline
\\
Evgenia Christoforou\textsuperscript{1,2,*},
Antonio Fern\'andez Anta\textsuperscript{1},
Agust\'in Santos\textsuperscript{1,3}
\\
\bigskip
\bf{1} IMDEA Networks Institute, Madrid, Spain
\\
\bf{2} Universidad Carlos III, Madrid, Spain
\\
\bf{3} Universidad Rey Juan Carlos, Madrid, Spain
\\
\bigskip

* evgenia.christoforou@imdea.org

\end{flushleft}
\section*{Abstract}
\deleted[]{We have designed a mechanism for Fair and Efficient Distribution of Resources (\fedor) in the presence of strategic agents. We consider a multiple-instances, Bayesian setting, where in each round the preference of an agent over the set of resources is a private information. We assume that $n$ agents are competing for $k$ non-identical indivisible goods, ($n>k$), in each of the $r$ rounds. In each round the strategic agents declare how much they value receiving any of the goods in the specific round. The agent declaring the highest valuation receives the good with the highest value, the agent with the second highest valuation receives the second highest valued good, etc. 
Hence we assume a decision function that assigns goods to agents based on their valuations. 
The novelty of the mechanism is that no payment scheme is required to achieve truthfulness in a setting with rational/strategic agents. The \fedor mechanism takes advantage of the repeated nature of the framework, and through a statistical test is able to punish the misreporting entities and be fair, truthful and socially efficient. \fedor is fair in the sense that, in expectation over the course of the rounds, all agents will receive the same good the same amount of times. \fedor is an eligible candidate for applications that require fair distribution of resources over time. For example, equal share of bandwidth for nodes through the same point of access. But further on, \fedor could be applied in less trivial settings like sponsored search, where payment is necessary and could be given in the form of a flat participation fee. \fedor could be a good candidate in a setting like that to solve the problem of starvation of publicity slots for some advertisers that have a difficult time determining their true valuations. To this extent we have performed a comparison with traditional mechanism applied to sponsored search, presenting the advantage of \fedor.   }
\added[]{We design a mechanism for Fair and Efficient Distribution of Resources (\fedor) in the presence of strategic agents. We consider a multiple-instances, Bayesian setting, where in each round the preference of an agent over the set of resources is a private information. We assume that in each of $r$ rounds $n$ agents are competing for $k$ non-identical indivisible goods, ($n>k$). In each round the strategic agents declare how much they value receiving any of the goods in the specific round. The agent declaring the highest valuation receives the good with the highest value, the agent with the second highest valuation receives the second highest valued good, etc. 
Hence we assume a decision function that assigns goods to agents based on their valuations. 
The novelty of the mechanism is that no payment scheme is required to achieve truthfulness in a setting with rational/strategic agents. The \fedor mechanism takes advantage of the repeated nature of the framework, and through a statistical test is able to punish the misreporting agents and be fair, truthful, and socially efficient. \fedor is fair in the sense that, in expectation over the course of the rounds, all agents will receive the same good the same amount of times. \fedor is an eligible candidate for applications that require fair distribution of resources over time. For example, equal share of bandwidth for nodes through the same point of access. But further on, \fedor can be applied in less trivial settings like sponsored search, where payment is necessary and can be given in the form of a flat participation fee. \fedor can be a good candidate in a setting like that to solve the problem of starvation of publicity slots for some advertisers that have a difficult time determining their true valuations. To this extent we perform a comparison with traditional mechanisms applied to sponsored search, presenting the advantage of \fedor.}


\section*{Introduction}

\added[]{Resource allocation in an efficient and fair manner among strategic agents~\footnote{We use the terms agent and player interchangeably},that can misreport their values to increase their benefit, is a non-trivial problem. }
\deleted[]{Consider a scenario where strategic agents need to have resources allocated to them in the most efficient and fair manner.} The most straightforward approach is the design of mechanisms where, once the  solution concept is achieved, fairness and efficiency are guaranteed. In order for the designed mechanism to be incentive compatible, \deleted[]{a transfer function} \added[]{economic incentives/payments} must usually be designed. In this work, we do not use \deleted[]{a transfer function}\added[]{any economic incentives/payments} and hence our mechanism design goals are achieved without any incentive constraint, that is without money. There is a plethora a real world applications deriving from computer science or from social sciences were payments can not be used as incentives for an allocation mechanism. 

In the field of computer science, \fedor can be applied to solve the problem of efficiently allocating CPU cycles between processes. Another application could be the periodical allocation in a fair manner of bandwidth among nodes connected to the same point of access. Imagine the scenario where many nodes are connected to the same point of access. An application could be built to support the feature of nodes declaring how much they value receiving bandwidth from the point of access at a specific time interval. Then, nodes would be served from the point of access according to \fedor . This would guarantee fairness and efficiency over time for all the nodes. More \deleted[]{application can be thought} \added[]{applications can be imagined}, where \fedor could allocate resources over the cloud. In this case it depends on the policy and the specific implementation on how \fedor could apply. \added[]{For example, a number of users with only periodic demands for large amount of resources could be sharing resources using \fedor instead of resolving into costly provisioning of resources.}     

Another possible application is hospitals sharing valuable equipment. No allocation based on the money that each hospital is willing to give for renting the equipment can be made. Hence, \fedor is a perfect candidate, without monetary payments can guarantee that all the hospitals sharing the valuable equipment will be allocated the same machine the same amount of time in expectation and most importantly in an efficient manner (when they most need it). Similar application could also be good candidates for \fedor , as long of course no life threatening decisions are left to be taken by the mechanism. 

Based on the nature of the problem, \deleted[]{we are assuming} \added[]{we assume} a Bayesian setting where agents have private information. Agents can be cheating, declaring preferences that do not correspond to their true valuations in an effort to increase their utility.  Unlike mechanisms with payments, were money incentives enforce the good behavior of the agents, this is not the case here. \fedor is designed especially to detect cheating and punish it. \added[]{This is feasible since \fedor is designed to work in a repeated setting, which allows the mechanism to gain knowledge on the agents' valuations and appropriately punish misbehaving agents, incentivizing their good behavior. Using the concept of linking mechanisms (explained in the Previous and Related Work paragraph) the mechanism is able to guarantee fairness and social efficiency.} 

We are assuming that preferences are declared within a common unit measure. In realistic applications, valuations might not even appear within the same unit of measure. Agents might not be able to associate their valuation under a common unit of measure.\deleted[]{ \fedor handles this problem by applying the Probability Integral Transformation (PIT)  and transforming the reported valuations to the uniform [0,1] distribution. Hence, whichever distribution of valuations can transform  to the uniform [0,1] distribution. This is why from this point onwards we will assume implicitly that the valuations declared by the agents follow the uniform [0,1] distribution.}
\added[]{To deal with this matter \fedor uses a common normalization process called Probability Integral Transformation (PIT)~\cite{pitref}. Using PIT any cumulative probability distribution function can be transformed to the uniform [0,1] distribution. 
Essentially, the PIT of a value $x$ extracted from a probability distribution is the aggregated probability of the values no larger than $x$ in the distribution. Intuitively, it is similar to the percentile of $x$. 
Under the assumption that \fedor has a historical sample of the agents' valuations, a distribution fitting the agents valuations can be defined. Since knowing the agent's valuation distribution is the only thing that we need to know to apply the PIT, from this point onwards we will assume that the valuations declared by the agents follow the uniform [0,1] distribution. In the work of Santos et al.~\cite{QPQ,QPQarxiv}, this approach was used in an analogous way to transform the agent's cost to the uniform distribution (see Figure 1. in~\cite{QPQ}). Having transformed the agent's valuation to the uniform [0,1] distribution, a {goodness of fit} (GoF) test can define whether an agent's valuation is following its true distribution, and act in an analogous manner; this guarantees that \fedor is a truthful mechanism (as we prove below).}


\added[]{
\paragraph{Non-trivial motivating example} We have mentioned before a number of intuitive examples where \fedor could apply. What these examples have in common is the difficulty or inappropriateness of using payments to allocate the resources. This fact, in combination with the repeated nature of the setting makes \fedor a perfect candidate in those scenarios. The good properties of the mechanism though (fairness, social efficiency and truthfulness) can make it a suitable candidate also to scenarios in which payments are essential for resource allocation, for example to sponsored search. }

\added[]{
As the business model of search engines' (i.e., Google, Yahoo!, Bing, etc) has evolved over time their main source of income are the ad links delivered to users each time they do a search. For each keyword search, there exists a number of available slots for advertisement in each page of results. Due to the fact that slots are limited while advertisers are numerous, and in combination with the search engines' goal for revenue maximization, auction mechanisms are used for slot allocation. 
}

\added[]{
The setting of sponsored search fits the setting for which \fedor was designed. A number of slots (goods in the general version) are allocated over $r$ consecutive instances, having the same advertisers (in a simplified version of the problem) competing for them. Each slot can be seen as having a relative value. A simple analysis of the sponsored search auction~\cite{edelman2005internet} treats the click-through rate (CTR) of an advertisement as correlated with the slot positioning and hence this is where the relative value for each slot emerges. 
}

\added[]{
Two famous auction mechanisms are the Vickrey-Clark-Groves auction (VCG)~\cite{ausubel2006lovely} and the Generalized Second Price auction (GSP)~\cite{edelman2005internet}. VCG is a powerful mechanism that is strategy proof and efficient (in every instance). In VCG the $i$th highest bidder gets the $i$th best good, but pays the \emph{externality} that she imposes on the other bidders by winning that slot. A mechanism like that is difficult to be understood by a normal advertiser (it is also difficult to determine its valuation), making VCG not such a suitable candidate for sponsored search. On the other hand there is GSP, were the $i$th highest bidder gets the $i$th best good, but pays to the seller the bid of the $(i+1)$st highest bidder. This mechanism is neither strategy proof neither efficient but is easily understandable by the advertiser. In fact this mechanism was developed by Google and it is also used by Yahoo!. Adwords~\cite{adworks2} of Google uses this mechanism but also takes into account a ``quality score"~\cite{adworks}. Part of that quality score is the click through rate  that a specific advertiser is estimated to obtain if she gets a specific slot. In this case, the allocation mechanism is basing its result on the product of the estimated CTR multiplied by the bid; and the payment of the bidder that wins a slot is obtained by multiplying her bid by the CTR. As mentioned in the previous paragraph we have modeled the CTR as a weight of the importance of each slot (like Edelman et al.~\cite{edelman2005internet}) and we assume that it is the same for all bidders .         
}

\added[]{
The key business ingredient of the current sponsored search model is promoting high quality advertisement (measured through CTR estimation) and in parallel increase of revenue for the search engine.
We argue that in addition to these features, fairness is a valid business model. Fairness, in the sense that no qualified advertiser would suffer publicity starvation (obtaining an ad slot few times or never). To this respect, \fedor could be used to guarantee fairness. To deal with the payments that the search engine would require, a flat fee for participating in $r$ rounds could be set. 
}

\added[]{
The mechanism could run centrally on the search engine and advertisers would communicate their desire to appear to a certain keyword search. 
For $r$ instances of this communication with the search engine the advertiser will pay a flat fee. \fedor guarantees that in expectation the advertiser will appear the same number of times in each slot. It also guarantees truthfulness and social efficiency, meaning that if the advertisers provide their true valuations on the slots they will be receiving a slot when they most value it. 
}

\added[]{
This business model could also apply to recommendation engines. Consider the example of Amazon's recommendation engine, that after a search for a good recommends to you several alternative goods. The way in which a decision is taken on which similar good should be presented to the customer can be based on the \fedor mechanism, so that fairness among all similar goods will be guaranteed. A similar example is the one of showing adds to a newspaper reader related to the content of the news she read. 
}

\added[]{
Comparing \fedor to the celebrated VCG mechanism, it is clear that VCG has an advantage since its properties hold deterministically. On the other hand, \fedor 's properties hold on expectation. This means that \fedor can only be applied on a repeated setting where the agents remain the same. Nevertheless, simulations have shown us that besides fairness, applying \fedor to a setting where the need of payments is inherent has another benefit. Due to the necessary flat participation fee scheme, the search engine (i.e., the seller) can set this fee in such a way to control her utility and the utility of the advertisers.  
}

 
\paragraph{Previous and Related Work}
The concept of mechanisms without money has been studied before by Schummer and Vohra\cite{withoutmoney} and Procaccia and Tennenholtz\cite{DBLP:journals/teco/ProcacciaT13}. We are focusing on mechanisms for resource allocation/distribution. To this respect, a related work is the one of Guo et al. \cite{DBLP:conf/wine/GuoCR09}. They study the problem of allocating a single item over multiple competing agents in a repeated setting, without the involvement of money. To do this, they introduce an artificial payment system; they propose a number of repeated Bayes-Nash incentive compatible mechanisms and analyze how competitive they are with mechanisms using money. In their setting they assume, like we do, that agents learn their private values before each interaction, and also that preferences are i.i.d according to a distribution that does not change over time. In a later work, Guo and Conitzer\cite{DBLP:conf/atal/GuoC10} design a strategy proof mechanism for allocating multiple heterogeneous goods among two agents, in a single shot, prior-free and payments-free setting. \fedor promotes a truthful declaration of values, since telling the truth increases the agents benefit (assuming the possibility of the agents being more than two) independently of the other agents strategy. Our mechanism considers a setting of multi-round interactions among the agents, for allocating a set of heterogeneous goods to them. We assume that all agents distributions can be transformed into a uniform [0,1] distribution, but besides that we have no information about the distribution of the agents. So, \fedor is \deleted[]{a prior-free and} payment-free mechanism that achieves fairness and efficiency. 

\added[]{The work of Moscibroda and Schmid~\cite{schmidinfocom} investigates mechanisms without payments for throughput maximization and compares their social welfare with payment mechanisms. In comparison with our work, these mechanisms can be applied in a non-repeated setting, but it is not guaranteed that a feasible solution in terms of social welfare can be found. Moscibroda and Schmid shade some light on the degree up to which payments are inevitable and the potential benefit from the use of mechanism without payments. In contrast to the kind of mechanisms they consider, in \fedor there is no need of a trusted entity.  
Due to the infeasibility of using money incentives in many computer science oriented problems, new techniques developed to substitute monetary incentives, such as the {\it tit-for-tat}  in BitTorrent~\cite{schmidbittorrent,otherbittorrent} and money-burning mechanisms~\cite{moneyburning}. Unfortunately, these techniques can not be easily generalized to other problems. In particular, in~\cite{interdomain} interdomain routing is analyzed from a game theoretic perspective and incentive-compatible
distributed mechanisms without payments are designed in a repeated setting. 
}

Rahman et al.~\cite{rahman2010improving} considers the fairness constraint focusing on P2P systems, and propose an alternative approach to resource allocation that achieves fairness and efficiency on effort-based incentives, as opposed to contribution or output-based incentives. This work is somehow related to ours by the fact that we also consider that fairness is not proportional to the valuations of the agents, but in our work no incentives are necessary to achieve efficiency. In addition, we give analytical proofs of the properties of our mechanism, unlike the work of Rahman et al.~\cite{rahman2010improving}. 

Our work is inspired by the concept of linking mechanisms proposed by Jackson and Sonnenschein\cite{jackson2007overcoming}. They show that when a lot of independent copies of the same decision problem are linked together, then no incentive constrains are needed for agents to be truthful. The spectrum of players' responses to a probability distribution is known by considering a budget restriction. They show that a linking mechanism is valid when the players' possible decisions are distributed following discrete probabilities. 
This work is a natural extension of the works of Santos et al.~\cite{QPQ,QPQarxiv}, were a mechanism was designed for allocating a single task, that every participating agent desires to have accomplished, but no agent wants to execute. The designed mechanism is called QPQ and also uses the concept of linking mechanism. Two fundamental deferences among \fedor and QPQ is: (1) they are solving problems with opposite goals, (2) \fedor considers multiple good assignment in a single round while QPQ solves the problem of single task assignments.

\paragraph{Contributions}

In this paper we propose a mechanism called \fedor that applies to a multi-instance setting, for solving the problem of allocating a number of heterogeneous goods to multiple agents in  a round of interaction. The mechanism solves the problem without monetary incentives. Our contributions are as follows:
\begin{itemize}
\item We propose a distributed solution for our mechanism, one in which no central authority is needed. This does not mean that a centralized solution can not take place, on the contrary. 
\item We analytically show that \fedor guarantees \emph{fairness}. Meaning that every agent will receive the $i$th best good, for every $i$, the same number of times in expectation.
\item We prove that the mechanism is \emph{socially efficient}, in the sense that the social utility (the total valuation of the goods assigned) is maximized, subject to the fairness property. This means that agents will often receive the $i$th best good when they value it the most (over the multiple interactions setting).  
\item We show that being honest and announcing the real valuation of the goods at every round maximizes the expected utility (value of goods received) of a player. This means that the mechanism is \emph{truthful}. 
\item \added[]{We motivate our social mechanism with a number of examples and apply it to a non-trivial paradigm. Through these examples and a discussion provided at the end, we motivate the need for the existence of mechanisms without payments.}
\item \deleted[]{We motivate our social mechanism with a number of examples and apply it to a non-trivial paradigm, that of sponsored search. The peculiarity of this paradigm is that money exchange is inherent. We evaluate the benefits of alleviating the constraint of money by: } 
\added[]{ Finally, we simulate our mechanism and we are able to compare it in the example of sponsored search with two classical auction mechanisms. In particular:}
\added[]{
\begin{itemize}
\item We compare experimentally \fedor with  the  established auction mechanisms Vickrey-Clarke-Groves (VCG) and (Generalized Second Price) GSP, in the presence of only honest workers. 
In contrast to the utilities in VCG and GSP that are fixed; by appropriately setting a flat participation fee to the auction, \fedor can achieve any utility for the seller and the player, in the sponsored search example. This advantage of our mechanism comes from the fact that \fedor is a payments-free mechanism.  
\item  Experimental results go hand by hand with our analytical results asserting the truthfulness and social efficiency properties. Moreover, experiments compare the performance in terms of utility between \fedor, VCG, and GSP; 
showing that with \fedor the utility of the honest players is independent of the percentage of cheaters or their cheating behavior, while in  VCG and GSP is dependent.
\end{itemize}
}
\end{itemize}

The rest of paper is organized as follows. In Section~\nameref{analysis} we formally present our model and notation used throughout the document. In the same section we present the \fedor mechanism and an analysis formally proving its properties. \deleted[]{In  Section~\nameref{s:application} we present the scenario of sponsored search as a non-trivial plausible application scenario for \fedor. More over we present simulation results showing graphically the properties of the mechanism.} \added[]{In Section~\nameref{lb:experiments}, we presented our simulation results comparing \fedor with two classical auction mechanisms.}  Finally in Section~\nameref{discussion} we discuss \added[]{the advantages and limitations of mechanisms without payments. Also we review}
 briefly the possible limitation of the mechanism and the future directions we are considering. 

\newpage

\newcommand{\D}{\ensuremath{\cal D}\xspace}
\section*{Analysis}
\label{analysis}
\subsection*{Model}
\label{s:model}

Before moving any further, let us define the setting we are considering as well as the problem we are solving. 

\begin{defn}[Setting]
\label{setting}
 We consider the presence of $n$ risk-neutral players, $N = \{1,2,\ldots, n\}$. Every one of the $n$ players participates in a (a priori unknown)  number of consecutive instances (each instance is considered a round) of an allocation game. In each instance of the allocation game: 
\begin{enumerate}
\item $k$ heterogeneous non-divisible goods are allocated among the $n$ players,  where $1 \leq k<n$.
\item For every good $i$, of the $k$ goods of the round, there exist a relative value depicted by the weight $w_i$ (goods in each round can be different but the relative values remain the same). We assume that goods are sorted by decreasing weight, i.e., $w_1\geq w_2\geq \cdots \geq w_k$.
\item Players declare how much they value obtaining any of the goods of the set, this is expressed by a single value. 
\end{enumerate} 
\end{defn}

The outcome set of the game played is the set $\D$ of all $k$-permutations of $N$ 
(i.e., all posible permutations of the subsets of $N$ of size $k$).
Hence, the outcome $d=(d_1, d_2, \ldots, d_k) \in \D$ is the ordered list of players to 
whom the slots will be allocated, so that the player $d_i$ will receive the
$i$th slot (whose weight is $w_i$).

Each player \emph{privately} observes her preferences 
over the alternatives in $\D$ before the collective choice in the game is made. 
This is modeled by assuming that player $i$ privately observes a parameter $\theta_i$, which determines her preferences over obtaining a good from the set. 
We say that $\theta_i$  is the player type, for every player $i$. The set of all possible types of player $i$ is $\Theta_i$.
We denote by $\theta = (\theta_1 , \theta_2 , \dots , \theta_n )$ the vector of player types. The set of all possible vectors is $\Theta = \Theta_1 \times \Theta_2  \times \dots \Theta_n$. We denote by $\theta_{-i}$ the vector obtained by removing $\theta_i$ from $\theta$. 

Thus, we denote by $\Pi = \Delta(\Theta)$ (in general, we denote by $\Delta(S)$ the set of all probability distributions over some set $S$). 
the set of all probability distributions over $\Theta$. It is assumed that there is a common prior distribution $\pi \in \Pi$ that is shared by all the players. 
We denote by $\pi_i \in \Delta(\Theta_i)$ the marginal probability of $\theta_i$. 
We define $\pi_i (\theta_i| \theta_{-i})$ as the conditional probability distribution of $\theta_i$ given $\theta_{-i}$. 

In this work we assume that the player types are normalized, in the sense that $\Theta_i=[0,1]$, for every player $i$. Moreover, for every player $i$ the marginal probability $\pi_i$ of $\theta_i$ is also normalized as the uniform continuous distribution in the interval $[0,1]$. This are not simple assumptions as in the work of Santos et al.~\cite{QPQ,QPQarxiv}, applying the Probability Integral Transformation (PIT) we can transform any probability distribution to the uniform [0,1] Finally, we assume that all distributions $\pi_i$ are independent, i.e., $\pi(\cdot|\theta_{-i})=\pi_i(\cdot)$. For this reasons we have made the simplification of having our notations free from the concept of the round.

Players' preferences over outcomes are represented by a utility function $u_i (d, \theta_i ) \in \mathbb{R}$ defined over all $d \in \D$ and $\theta_i \in \Theta_i$. In this paper the utility function is defined as $u_i(d, \theta_i )=\theta_i \sum_{j=1}^k w_j \delta_{i d_j}$, $\forall i \in N, \forall d \in \D$, where $\delta_{ab}$ is the Kronecker delta.

We assume that the set of outcomes $\D$, the set of players $N$, the type sets in $\Theta$, the
common prior distribution $\pi \in \Pi$, and the utility functions $u_i$ are \emph{common knowledge} for all the players. 
Similarly, the game rules defined by the mechanism used are also common knowledge. 
However, the specific type $\theta_i$ observed by player $i$ belongs to her \emph{private information}. 

A strategy for the player $i$ is a map $\sigma_i: \Theta_i \rightarrow \Delta(\Theta_i)$, where $\sigma_i(\hat{\theta}_i|\theta_i)$ is the conditional probability that the player reports $\hat{\theta}_i$ when her true type is $\theta_i$. 
A strategy $\sigma_i$ is \textit{truthful} (and we say that the player is \emph{honest}) if, for every pair $(\hat{\theta}_i, \theta_i)$,  $\sigma_i(\hat{\theta}_i|\theta_i) = 1$ if $\hat{\theta}_i = \theta_i$ and $0$ otherwise.
As usually done, we will use $\hat{\theta}_i$ to denote the reported type, and $\theta_i$ the actual type. 

In the paper we assume the availability of a \emph{goodness of fit} (GoF) test, that can test whether a value (a reported type in our case)
is a sample from a uniform distribution in $[0,1]$. Hence, $\mathit{GoF\_Test}(\hat{\theta}_i)$ is true if and only if $\hat{\theta}_i$ is a uniform random sample. By this definition we assume that the GoF test is perfect, but this is a theoretical artifice. \deleted[]{In Section~\ref{s:application} we give a good approximation to a perfect GoF test. } \added[]{In Section~\nameref{lb:experiments}, through simulations we find   a good approximation to a perfect GoF test.}

In this paper, we search for a mechanism $\langle \Theta, g \rangle$, where $g: \hat{\Theta} \rightarrow \Delta(\D)$ is the decision function. With this function, $g(d|\hat{\theta})$ is the conditional probability that the mechanism decides $d \in \D$ when the players report $\hat{\theta}$. The mechanism must be
without utility transfers (payments) and satisfy the following properties.
\begin{itemize}
\item 
\emph{Fairness.} Every player $i$ (honest or not) gets the $j$th slot, the same proportion of times in expectation, for every $j$. I.e.,
\begin{equation} \E_{\theta, \hat\theta \in \Theta}\left[\sum_{d \in D: d_j=i} g(d|\hat{\theta})\right] = \E_{\theta, \hat\theta\in \Theta}\left[\sum_{d \in D: d_j=i'} g(d|\hat{\theta})\right], \forall i,i' \in N, \forall j \in [1,k], \forall \sigma.\end{equation}

\item
\emph{Truthfulness.} The strategy that maximizes the utility of a player $i$ is to be honest. I.e., 
\begin{equation} \E_{\theta, \hat\theta\in \Theta}\left[ \sum_{d \in \D} u_i(d, \theta_i) g(d | \theta_i,\hat{\theta}_{-i}) \right] \geq \E_{\theta, \hat\theta\in \Theta} \left[\sum_{d \in \D} u_i(d, \theta_i) g(d | \hat{\theta}) \right], \forall i \in N, \forall \sigma. \end{equation}

\item
\emph{Social Efficiency.} If all players are honest, the expected social utility with decision function $g$ is maximized with respect to any other decision function $g'$. I.e.,
\begin{equation}
\E_{\theta\in \Theta}\left[ \sum_{i \in N} \sum_{d \in \D} u_i(d, \theta_i) g(d | \theta) \right] \geq \E_{\theta\in \Theta}\left[ \sum_{i \in N} \sum_{d \in \D} u_i(d, \theta_i) g'(d | \theta) \right].
\end{equation}

\end{itemize}

\subsection*{The \fedor Mechanism}
\label{lb:mechanism}
%



\fedor solves the problem of allocating multiple resources in a multi-instance setting, with a decision function that is such that the properties of fairness,  truthfulness and social efficiency are satisfied without utility transfers.  Algorithm~\ref{alg3} describes the steps that a generic player $i$ executes and the actions taken by the mechanism. This algorithm represents how the mechanism can be applied in a distributed setting, in the absence of a central authority that takes the allocation decision.   

In every instance of the allocation game (defined as a new announcement of $k$ goods to be allocated) a player $i$ observes its type $\theta_i$. Then, applying its strategy $\sigma_i$, the player chooses a
type $\hat{\theta}_i$ that will be reported as the player's bid. These bids are gossiped among all players, so that all of them end with the same vector of reported types $\hat{\theta}$. We are assuming that communication among players is perfect (reliable and synchronous). Once a player has received all the reported types from the rest of the players, the algorithm moves on, applying the mechanism. 

Each player applies the mechanism to obtain the allocation decision. The decision function $g_f:\Theta \rightarrow D$ of \fedor is one where $g_f(\hat{\theta})=(i_1,i_2, \ldots i_k)$ such that $(i_1,i_2, \ldots i_k):v_{i_1}\geq v_{i_2} \geq \ldots v_{i_k}$ and $v_{i_k} \geq v_i \forall i \in N\setminus\{i_1,..i_k\}$ where
$$
v_i = \begin{cases} \hat{\theta_i} & \quad \mathrm{GoF test}(\hat{\theta_i})\\
\mathrm{pseudorandom}(\hat{\theta}_{-i}) & \quad \mathrm{otherwise}\\ \end{cases} 
$$  
In other words the decision function of \fedor is one that given the vector of declared types $\hat{\theta}_i$ provides the vector of players $(i_1,i_2, \ldots i_k)$ receiving the $k$ goods, by $i_j$ we declare the player receiving the $j$th good. The decision of the mechanism is based on a cheaters (following a strategy besides declaring their true type) detection scheme. Recall that, as we explained in Subsection~\ref{s:model} each player has a uniform continuous distribution in the interval [0,1] over all instances. This is a common information that all players share and allows them to verify announced types of the rest of the players. If the player $i$ declared type $\hat{\theta}_i$ passes the goodness of fit (GoF) test then the declared value is accepted. Otherwise the declared value is replaced, by each player, by a pseudorandom value uniform in $[0,1]$ derived from the rest of reported types $\hat{\theta}_{-i}$. Since all players apply the same perfect goodness-of-fit test to the same
value $\hat{\theta}_i$, and the same pseudorandom function to the same vector $\hat{\theta}_{-i}$, they all assign the same value to $v_i$. As a result, the same vector $v$ is obtained
by each player, 
which contains the values of the players that passed the test and the generated values for those that did not.        
The mechanism allocates the $k$ goods in order to the $k$ players with the highest value in the vector $v$.     

It is important to notice that the algorithm is the same for all players and that it is based on information known by all of them. Therefore, no central entity is required to apply the
\fedor mechanism. Of course, if the environment is such that a central authority is available, the algorithm can trivially be transformed to accommodate a centralized solution using \fedor. 

\subsection*{Formal Analysis}
\label{s:analysis}

In this section we analyze the \fedor mechanism and we show that the three desired properties described in Subsection~\nameref{s:model}, namely fairness, truthfulness, and social efficiency are satisfied. We start by proving the latter. 

We show that \fedor is socially efficient if all players are honest. To prove this, first we prove that there is no mechanism that has a decision function that gives a higher utility to a player, independently of the declared types  $\Theta$.    
\begin{lemma}
\label{optimality}
There is no mechanism $\langle \Theta, g \rangle$ such that 
$$E[\sum_{i \in N} \sum_{d \in \D} u_i(d, \theta_i) g(d | \theta)] > E[\sum_{i \in N} \sum_{d \in \D} u_i(d, \theta_i) g_f(d | \theta)].$$
\end{lemma}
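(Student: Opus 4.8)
The plan is to establish \emph{pointwise} optimality: for each fixed type vector $\theta$, the decision function $g_f$ places all of its probability mass on an outcome that maximizes the realized social utility, and then to integrate this inequality over $\theta$.

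First I would rewrite the social utility of a fixed outcome $d \in \D$. Substituting the utility $u_i(d,\theta_i)=\theta_i\sum_{j=1}^k w_j \delta_{i d_j}$ and summing over all players, the double sum simplifies to
\[
\sum_{i\in N} u_i(d,\theta_i) \;=\; \sum_{j=1}^k w_j\,\theta_{d_j},
\]
because slot $j$ contributes its weight $w_j$ multiplied by the type of its recipient $d_j$, and every player receives at most one slot. Consequently, for an arbitrary decision function $g$ and a fixed $\theta$, the bracketed quantity is the convex combination $\sum_{d\in\D} g(d|\theta)\sum_{j=1}^k w_j\theta_{d_j}$, which cannot exceed $\max_{d\in\D}\sum_{j=1}^k w_j\theta_{d_j}$.

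Next I would identify that maximum via the rearrangement inequality. Since the weights satisfy $w_1\geq\cdots\geq w_k$, the sum $\sum_{j=1}^k w_j\theta_{d_j}$ is maximized by selecting the $k$ players with the largest types and assigning them to slots in order of decreasing type, so that the largest weight is paired with the largest selected type. Writing $\theta_{(1)}\geq\cdots\geq\theta_{(n)}$ for the order statistics of $\theta$, this maximum equals $\sum_{j=1}^k w_j\theta_{(j)}$. I would then observe that when $g_f$ is evaluated at the true types of honest players --- genuine uniform $[0,1]$ samples, which therefore pass the idealized GoF test --- we have $v_i=\theta_i$ for every $i$, so by construction $g_f$ outputs exactly this maximizing allocation with probability one. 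Hence $\sum_{d\in\D} g_f(d|\theta)\sum_{j=1}^k w_j\theta_{d_j}=\sum_{j=1}^k w_j\theta_{(j)}$, matching the upper bound above.

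Combining the two observations, for every $\theta$ the bracketed expression for $g$ is at most the one for $g_f$; taking expectation over $\theta$ preserves the inequality and yields the claim. The argument is essentially routine once the social utility is rewritten, so I expect no serious obstacle; the only points requiring a word of care are the randomization in $g$ (dispatched by the convex-combination bound) and the tie-breaking in the rearrangement step, which is harmless because equal types among continuous uniform samples occur with probability zero and so do not affect the expectation.
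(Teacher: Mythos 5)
Your proposal is correct and follows essentially the same route as the paper: both arguments reduce to the pointwise observation that, for each fixed $\theta$, any decision function's social utility is a convex combination bounded by $\max_{d\in\D}\sum_{j=1}^k w_j\theta_{d_j}$, and that $g_f$ attains this maximum because it sorts the (honestly reported, GoF-passing) types against the decreasing weights. The only cosmetic difference is that the paper phrases the argument as a proof by contradiction while you argue directly and name the rearrangement inequality explicitly.
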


\begin{proof}
Assume the claim is not true. Then, there is a mechanism $\langle \Theta, g \rangle$ such that 
\begin{equation}
E[\sum_{i \in N} \sum_{d \in \D} u_i(d, \theta_i) g(d | \theta)] > E[\sum_{i \in N} \sum_{d \in \D} u_i(d, \theta_i) g_f(d | \theta)].
\end{equation}
This means that there is at least one $\theta$ such that 
\begin{equation}
\sum_{i \in N} \sum_{d \in \D} u_i(d, \theta_i) g(d | \theta) >  \sum_{i \in N} \sum_{d \in \D} u_i(d, \theta_i) g_f(d | \theta).
\end{equation}

Let $d_x=(x_1,x_2, \ldots, x_k)$ be an outcome that maximizes  $\sum_{i \in N} u_i(d_x, \theta_i)$. \\* Then, 
$\sum_{i \in N} u_i(d_x, \theta_i)  \geq \sum_{i \in N} \sum_{d \in \D} u_i(d, \theta_i) g(d | \theta)$.
On the other hand, let $g_f(d_f | \theta)=1$ for $d_f=(i_1, i_2, \ldots,i_k)$. By the definition of $u_i$, this means that
$\sum_{j=1}^k \theta_{x_j} w_j > \sum_{j=1}^k \theta_{i_j} w_j$. However, this is not possible, since $w_1 \geq w_2 \geq \cdots \geq w_k$, and the values $\theta_{i_j}$ are the largest values
in $\theta$ in decreasing order, i.e., $\theta_{i_1} \geq \theta_{i_2} \geq \cdots \geq \theta_{i_k}$. Hence, the mechanism $\langle \Theta, g \rangle$ does not exist.
\end{proof}

Since there is no mechanism that has a decision function that is better than the one of \fedor mechanism, then we can also proof that when all players are honest (a.k.a declaring their true type) the social utility (i.e. the total utility of all players) is maximized.  
\begin{theorem}[Social efficiency]
\label{optimality2}
For any mechanism $\langle \Theta, g \rangle$ and strategies $\sigma_i, \forall i \in N$, it holds that 
$$E[\sum_{i \in N} \sum_{d \in \D} u_i(d, \theta_i) g_f(d | \theta)] \geq E[\sum_{i \in N} \sum_{d \in \D} u_i(d, \theta_i) g(d | \hat{\theta}) | \sigma_1, \ldots, \sigma_n].$$
\end{theorem}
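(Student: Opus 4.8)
The plan is to reduce this statement to Lemma~\ref{optimality}, which already establishes that $g_f$ fed the true types maximizes the expected social utility among \emph{all} decision functions fed the true types. The only gap is that the right-hand side of the theorem evaluates an arbitrary $g$ on the \emph{reported} vector $\hat\theta$, which is itself random and determined by the strategies $\sigma_1,\ldots,\sigma_n$, whereas Lemma~\ref{optimality} is phrased in terms of decision functions applied directly to the true vector $\theta$. So the first thing I would do is absorb the randomness introduced by the strategies into a single ``effective'' decision function defined on the true types.

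Concretely, I would define $g''(d\mid\theta) := E_{\hat\theta\mid\theta}[\,g(d\mid\hat\theta)\,]$, where the expectation is taken over $\hat\theta$ drawn according to the strategy profile conditioned on the true types $\theta$ (recall $\sigma_i(\hat\theta_i\mid\theta_i)$ is each player's reporting law, and since the $\pi_i$ are independent the joint conditional law of $\hat\theta$ given $\theta$ factors as the product of the $\sigma_i$). I would then verify that $g''$ is a legitimate decision function: for each fixed $\theta$, $g''(\cdot\mid\theta)$ is a convex combination of the probability vectors $g(\cdot\mid\hat\theta)\in\Delta(\D)$ and hence itself lies in $\Delta(\D)$. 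Therefore $\langle\Theta,g''\rangle$ is an admissible mechanism to which Lemma~\ref{optimality} applies.

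The crux is then a law-of-total-expectation manipulation. Writing the right-hand side as $E_\theta\big[E_{\hat\theta\mid\theta}[\sum_{i\in N}\sum_{d\in\D} u_i(d,\theta_i)\,g(d\mid\hat\theta)]\big]$ and using that each $u_i(d,\theta_i)$ depends only on the \emph{true} type $\theta_i$ (so it is constant with respect to the inner expectation over $\hat\theta$), I would pull $u_i(d,\theta_i)$ outside the inner expectation and collapse $E_{\hat\theta\mid\theta}[g(d\mid\hat\theta)]$ into $g''(d\mid\theta)$. This rewrites the right-hand side exactly as $E_\theta[\sum_{i\in N}\sum_{d\in\D} u_i(d,\theta_i)\,g''(d\mid\theta)]$, which by Lemma~\ref{optimality} is bounded above by $E_\theta[\sum_{i\in N}\sum_{d\in\D} u_i(d,\theta_i)\,g_f(d\mid\theta)]$, i.e.\ the left-hand side of the theorem.

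I expect the main obstacle to be careful bookkeeping rather than any deep argument: making the conditioning explicit so that the expectation over $\hat\theta$ can be cleanly separated from the utility term (this separation hinges on $u_i$ being measured at the true type and not the reported one, which is precisely why honest play can do no worse than any strategic play under an arbitrary $g$), and confirming that the averaged object $g''$ still meets the definition of a decision function so that Lemma~\ref{optimality} is genuinely applicable. Once $g''$ is in hand, the desired inequality is immediate.
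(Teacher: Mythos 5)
Your proposal is correct and follows essentially the same route as the paper: the paper's proof also defines an averaged decision function $g'(d\mid\theta)=\int\cdots\int g(d\mid\hat\theta)\,\sigma_1(\hat\theta_1\mid\theta_1)\cdots\sigma_n(\hat\theta_n\mid\theta_n)\,d\hat\theta_n\cdots d\hat\theta_1$ (your $g''$) and invokes Lemma~\ref{optimality}, merely phrasing the argument as a contradiction instead of a direct inequality. Your additional check that the averaged object remains a valid element of $\Delta(\D)$ is a point the paper leaves implicit, but the substance is identical.
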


\begin{proof}
Assume the claim is not true, and hence there is a mechanism $\langle \Theta, g \rangle$ and strategies $\sigma_i, \forall i \in N$, such that 
\begin{equation}
E[\sum_{i \in N} \sum_{d \in \D} u_i(d, \theta_i) g_f(d | \theta)] < E[\sum_{i \in N} \sum_{d \in \D} u_i(d, \theta_i) g(d | \hat{\theta}) | \sigma_1, \ldots, \sigma_n].
\end{equation}
Then, let us define a new mechanism $\langle \Theta, g' \rangle$ as follows.
{\small$
g'(d | \theta) = \int_0^1\int_0^1\ldots\int_0^1 g(d | \hat{\theta})\cdot \sigma_1(\hat{\theta}_1 | \theta_1) \cdot \ldots \cdot \sigma_n(\hat{\theta}_n | \theta_n) d \hat{\theta}_n d \hat{\theta}_{n-1} \ldots d \hat{\theta}_1.
$}
The decision function $g'(\cdot)$ assigns the same probability to each possible outcome $d$ as the combination of the strategies $\sigma_i(\cdot)$ and the decision function $g(\cdot)$. 
Thus, 
{\small
\begin{eqnarray}
E[\sum_{i \in N} \sum_{d \in \D} u_i(d, \theta_i) g_f(d | \theta)]
<  E[\sum_{i \in N} \sum_{d \in \D} u_i(d, \theta_i) g(d | \hat{\theta}) | \sigma_1, \ldots, \sigma_n] \\
=  E[\sum_{i \in N} \sum_{d \in \D} u_i(d, \theta_i) g'(d | \theta)],
\end{eqnarray}}
contradicting Lemma~\ref{optimality}.     
%
%
\end{proof} 

We will now start proving the truthfulness property of the \fedor mechanism. Let us denote by $E[\hat{U}_i | \sigma_1, \ldots,\sigma_n]$ the expected utility of player
$i$ under the \fedor mechanism when players follow strategy $\sigma_i, \forall i \in N$.

\begin{lemma}
\label{final2}
For any non empty set of players $S \subset N$, the total expected utility of the players in $S$ does not depend on the strategies of the rest of players. 
\end{lemma}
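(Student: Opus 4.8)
The plan is to exploit the defining feature of \fedor: whatever strategy a player follows, the value $v_\ell$ that actually enters the allocation is a uniform $[0,1]$ random variable, and these values are mutually independent. I would first establish this invariance player by player. If player $\ell$'s reports pass the (perfect) GoF test, then by the very definition of the test $\hat\theta_\ell$ is a genuine uniform $[0,1]$ sample and $v_\ell=\hat\theta_\ell$; if they fail, then $v_\ell=\mathrm{pseudorandom}(\hat\theta_{-\ell})$ is a fresh uniform $[0,1]$ sample by construction. Under the idealization of a perfect GoF test and a perfect pseudorandom function (as assumed in the Model), this makes $(v_1,\dots,v_n)$ a family of i.i.d.\ uniform $[0,1]$ variables whose joint law is the same for every strategy profile; in particular the value $v_i$ of a player $i\in S$ does not depend on the strategies of the players outside $S$, even when $i$ herself is detected and her value is drawn by the pseudorandom function.

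Next I would use that the decision function $g_f$ assigns the $j$th good to the player whose $v$-value has rank $j$, so the entire allocation is a deterministic function of the ranking of $(v_1,\dots,v_n)$ alone. Hence the utility $u_i=\theta_i\sum_{j=1}^k w_j\,\delta_{i d_j}$ of a player $i\in S$ is a function only of her true type $\theta_i$ and of the rank of $v_i$; it never refers to the types $\theta_j$ of players outside $S$. Writing the quantity to be controlled as $\sum_{i\in S}\E[\theta_i\sum_{j=1}^k w_j\,\mathbf{1}\{\mathrm{rank}(v_i)=j\}]$, the only channel through which players outside $S$ can influence it is the count $|\{j\notin S: v_j>v_i\}|$, which shifts the rank of each $v_i$.

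I would finish with a coupling/marginalization step. Fixing the true types and internal randomness of the players in $S$ (so that $(\theta_i,v_i)_{i\in S}$ is held common) and using the same realized i.i.d.\ uniforms $(v_j)_{j\notin S}$ under two profiles that agree on $S$, the rankings, the allocation, and therefore every $u_i$ with $i\in S$ coincide; since $(v_j)_{j\notin S}$ has a law independent of $\sigma_j$, $j\notin S$, integrating over it yields the same expected total for any choice of those strategies. The main obstacle is the first step: the whole argument rests on the claim that $v_\ell$ is uniform and independent regardless of $\sigma_\ell$, which is precisely what the idealized perfect GoF test and perfect pseudorandom function buy us — the test guarantees that any report surviving as $v_\ell$ is truly uniform, and the replacement guarantees a genuinely independent uniform draw for every detected deviator.
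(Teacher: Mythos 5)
Your proof is correct and rests on essentially the same idea as the paper's: under the perfect GoF test and the idealized pseudorandom replacement, the decision values $v_1,\dots,v_n$ are i.i.d.\ uniform on $[0,1]$ for every strategy profile, and players outside $S$ affect the utilities of players in $S$ only through their $v$-values, never through their types. The paper formalizes the final step by integrating out each outside player's type and changing variables to reduce to the honest case one player at a time, while you use a direct coupling over the realized uniforms; these are two phrasings of the same marginalization argument, and both lean on the same idealization that the replaced values are genuinely independent uniforms.
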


\begin{proof}
We start by proving that the expected utility of a player $i$ does not depend on the strategy of another player $h \neq i$. In particular, we will show that 
$E[\hat{U}_i | \sigma_x, \forall x \in N] = E[\hat{U}_i | \sigma_x, \forall x \neq h; h \ \mathrm{honest}]$, i.e.,
that the utility of $i$ does not depend on whether $h$ is honest or not, which implies the former statement.

For every player $x \in N$, let $f_x(v_x,\theta_x)$ be the density function of the pairs of values $(v_x,\theta_x)$, where $\theta_x$ is the type observed by player $x$ and $v_x$ is the
entry corresponding to $x$ in the vector $v$ used to decide in \fedor (see Algorithm~\ref{alg3}). If the strategy $\sigma_x(\cdot)$
induces a uniform distribution on the reported types $\hat{\theta}_x$ (recall that $\theta_x$ does follow a uniform distribution), then
the values $v_x$ will always be the declared types (bid) $\hat{\theta}_x$. Otherwise, the GoF test will always fail and $v_x$ will be pseudorandom values generated
from the rest of declared values. In any case, the values $v_x$ are uniformly distributed, and the function $f_x$ is independent from the rest of functions $f_i, \forall i\neq x$.
Moreover, the marginal distributions of $f_x(v_x,\theta_x)$ must also be uniform.

Let us consider any player $i \in S$. The expected utility $E[\hat{U}_i | \sigma_x,$ $\forall x \in N]$
of the player can then be computed as
\begin{eqnarray}
& & \sum_{d \in \D} \underbrace{\int_0^1 \cdots \int_0^1}_{2n} u_i(d | \theta_i) g_f(d | v) f_1(v_1,\theta_1)\cdots f_n(v_n,\theta_n) \\
& & d v_1 \cdots dv_n d \theta_1 \cdots d \theta_n \\
& = & \sum_{d \in \D} \underbrace{\int_0^1 \cdots  \int_0^1}_{2n-1} u_i(d | \theta_i) g_f(d | v) f_1(v_1,\theta_1)\cdots \\
& & f_{h-1}(v_{h-1},\theta_{h-1}) f_{h+1}(v_{h+1},\theta_{h+1}) \cdots f_n(v_n,\theta_n) \\
& & \int_0^1 f_h(v_h,\theta_h) d \theta_h d v_1 \cdots dv_n d \theta_1 \cdots d \theta_{h-1} d \theta_{h+1} \cdots d \theta_n \\
& = & \sum_{d \in \D} \underbrace{\int_0^1 \cdots  \int_0^1}_{2n-1} u_i(d | \theta_i) g_f(d | v) f_1(v_1,\theta_1)\cdots \\
& & f_{h-1}(v_{h-1},\theta_{h-1}) f_{h+1}(v_{h+1},\theta_{h+1}) \cdots f_n(v_n,\theta_n)  \\
& & d v_1 \cdots dv_n d \theta_1 \cdots d \theta_{h-1} d \theta_{h+1} \cdots d \theta_n.
\end{eqnarray}

The first equality follows from the independence of $f_h$ from the other $f_j$. The last equality follows from the fact that $\int_0^1 f_h(v_h,\theta_h) d \theta_h = 1$, from the uniform marginal distribution of $f_h$ with respect to $\theta_h$. Now, applying a change of variable, we can replace in the above expression $v_h$ by $\theta_h$, resulting after reordering in
\begin{eqnarray}
E[\hat{U}_i  | \sigma_x, \forall x \in N]  =   {\tiny\sum_{d \in \D} \underbrace{\int_0^1 \cdots  \int_0^1}_{2n-1}} u_i(d | \theta_i) g_f(d | v_{-h}, \theta_h) \\
f_1(v_1,\theta_1)\cdots  f_{h-1}(v_{h-1},\theta_{h-1})  f_{h+1}(v_{h+1},\theta_{h+1}) \cdots \\
f_n(v_n,\theta_n) d v_1 \cdots d v_{h-1} d \theta_h d v_{h+1} \cdots dv_n d \theta_1 \cdots  d \theta_n \\
 =  E[\hat{U}_i | \sigma_x, \forall x \neq h; h \ \mathrm{honest}].
\end{eqnarray}

This property can now be applied iteratively for every player $h \notin S$, showing that the expected utility of $i \in S$ is independent of the strategy of the
players in $S$. Summing over all the players in $S$ the claim of the lemma is proved.
\end{proof}

Using Lemma~\ref{final2} we can proof the truthfulness property. 

\begin{theorem}[Truthfulness]
\label{thetheorem}
The strategy that maximizes the utility of a player $i$ is to be honest. I.e., 
$E[\hat{U}_i | \sigma_x, \forall x \in N ] \leq E[\hat{U}_i | \sigma_x, \forall x \neq i; i \ \mathrm{honest} ]$.
\end{theorem}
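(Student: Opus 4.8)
The plan is to combine the strategy-independence established in Lemma~\ref{final2} with the global social-efficiency guarantee of Theorem~\ref{optimality2}. The first observation is that, by Lemma~\ref{final2} applied to the singleton set $S=\{i\}$, the quantity $E[\hat{U}_i\mid\cdot]$ is unchanged if we replace the strategies of the players $x\neq i$ by any other strategies. Both sides of the claimed inequality carry the \emph{same} opponent strategies, so both keep their value if we make every player other than $i$ honest. Hence it suffices to prove the reduced statement $E[\hat{U}_i\mid \sigma_i,\ x\neq i\ \mathrm{honest}]\le E[\hat{U}_i\mid i\ \mathrm{honest},\ x\neq i\ \mathrm{honest}]$, i.e., that against honest opponents player $i$ can do no better than reporting truthfully.

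To prove the reduced statement I would write $A=E[\hat{U}_i\mid \sigma_i,\ \mathrm{others\ honest}]$ and $B=E[\hat{U}_i\mid \mathrm{all\ honest}]$, and let $C$ denote the total expected utility of the players in $N\setminus\{i\}$ when the others are honest. By Lemma~\ref{final2} applied to $S=N\setminus\{i\}$, this value $C$ does not depend on the strategy of $i$, so it is identical in the two scenarios. The total social utility is then $A+C$ when $i$ plays $\sigma_i$ and $B+C$ when $i$ is honest. Now I invoke Theorem~\ref{optimality2} with $g=g_f$ and the profile in which every $x\neq i$ is honest while $i$ uses $\sigma_i$: the theorem states that the fully honest run of \fedor maximizes the expected social utility, i.e. $B+C\ge A+C$. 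Cancelling $C$ gives $B\ge A$, which is exactly the reduced statement, and by the previous paragraph the full claim follows.

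The main obstacle here is bookkeeping consistency rather than any hard estimate. One must verify that $\hat{U}_i$, the social utility appearing in Theorem~\ref{optimality2}, and the quantity $C$ coming from Lemma~\ref{final2} are all measured with respect to the \emph{true} types $\theta_i$ (the mechanism consults the reported types only inside $g_f$), so that the additive split of the social utility as $\hat U_i+\sum_{x\neq i}\hat U_x$ holds verbatim in both scenarios and the cancellation of $C$ is legitimate. I would also make explicit that the profile (others honest, $i$ playing $\sigma_i$) is an admissible instance for Theorem~\ref{optimality2} with $g=g_f$, so that the effective decision function $g'$ constructed in that proof coincides with running \fedor on the reports produced by this profile. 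Once these identifications are pinned down the inequality is immediate, and no further properties of the GoF test or of the pseudorandom fallback are needed beyond those already used to establish Lemmas~\ref{optimality} and \ref{final2}.
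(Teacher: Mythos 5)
Your proposal is correct and follows essentially the same route as the paper: it uses Lemma~\ref{final2} to replace the opponents' strategies by honest ones on both sides and to fix the total utility of $N\setminus\{i\}$, and then invokes the social-efficiency result (Theorem~\ref{optimality2}) to compare the two social utilities and cancel the common term. The only difference is presentational --- you argue directly while the paper phrases the same steps as a proof by contradiction.
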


\begin{proof}
Assume for contradiction that the claim is not true. Then, there are strategies $\sigma_x, \forall x \in N$ such that
\begin{equation}
E[\hat{U}_i | \sigma_x, \forall x \in N ] > E[\hat{U}_i | \sigma_x, \forall x \neq i; i \ \mathrm{honest} ]. \label{eq1}
\end{equation}
From Lemma~\ref{final2} we have the following identities (the third one uses linearity of expectations).
\begin{eqnarray}
\!\!\!\!\!\!\!\!E[\hat{U}_i | \sigma_x, \forall x \in N ]  =  E[\hat{U}_i | \sigma_i; x \text{ is honest}, \forall x\neq i ] \label{eq2} \\
\!\!\!\!\!\!\!\!E[\hat{U}_i | \sigma_x, \forall x \neq i; i \ \mathrm{honest} ] =  E[\hat{U}_i | \text{all players are honest}] \label{eq3} \\
\!\!\!\!\!\!\!\!E[ \sum_{h\neq i} \hat{U}_h | \sigma_i; x \text{ is honest}, \forall x\neq i ]  =  E[\sum_{h\neq i} \hat{U}_h | \text{all players are honest}] \label{eq4}
\end{eqnarray}
Replacing equations \ref{eq2} and \ref{eq3} in \ref{eq1}, we obtain
\begin{equation}
E[\hat{U}_i | \sigma_i; x \text{ is honest}, \forall x\neq i ] > E[\hat{U}_i | \text{all players are honest}].
\end{equation}
Which added to \ref{eq4} yields
\begin{equation}
E[ \sum_{h\in N} \hat{U}_h | \sigma_i; x \text{ is honest}, \forall x\neq i ] > E[\sum_{h\in N} \hat{U}_h | \text{all players are honest}].
\end{equation}
However, this contradicts Theorem~\ref{optimality2}.
\end{proof}

We concentrate now in the fairness property. Recall the \fedor mechanism as it was described in the previous subsection. Either because the player reports types that are uniformly distributed, or because they are replaced by \fedor with uniform pseudorandom values. The values $v_i$ of the vector $v$ used to distribute the goods are independent random samples form a uniform distribution $[0,1]$. Hence the following theorem.

\begin{theorem}[Fairness]
\label{fairness-goods}
Every player $i$ (honest or not) has the same probability $1/n$ of getting the $j$th slot, for every $j$.
\end{theorem}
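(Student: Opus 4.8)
The plan is to reduce the fairness claim to a statement about the \emph{rank} of player $i$'s entry in the decision vector $v=(v_1,\ldots,v_n)$. Recall that \fedor assigns the $j$th slot to the player whose entry is the $j$th largest in $v$; hence player $i$ receives the $j$th slot precisely when $v_i$ has rank $j$ among $v_1,\ldots,v_n$. So it suffices to show that, for every $i$ and every $j$, the probability that $v_i$ is the $j$th largest coordinate equals $1/n$, regardless of the strategy profile.

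The key input is the fact, already isolated in the paragraph preceding the theorem and used in the proof of Lemma~\ref{final2}, that each coordinate $v_i$ is a uniform sample on $[0,1]$ and that the coordinates are mutually independent, whatever strategies the players follow. Concretely, if a player's strategy induces a uniform distribution on her reported type $\hat\theta_i$, then the (perfect) GoF test passes and $v_i=\hat\theta_i$ is uniform; otherwise the test fails on every instance and $v_i$ is replaced by a fresh uniform pseudorandom value. In either case $v_i$ is uniform on $[0,1]$, exactly as captured by the density $f_i(v_i,\theta_i)$ with uniform marginal in $v_i$ appearing in Lemma~\ref{final2}.

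Granting this, the argument is short. Since $v_1,\ldots,v_n$ are independent and identically (uniformly) distributed on a continuous interval, they are almost surely pairwise distinct, so the induced ranking is a well-defined permutation of $\{1,\ldots,n\}$; moreover, being i.i.d.\ they are exchangeable, so all $n!$ orderings occur with equal probability. For a fixed rank $j$, the number of orderings placing $v_i$ in position $j$ is $(n-1)!$, whence
\[
\Pr[\,v_i \text{ is the } j\text{th largest}\,]=\frac{(n-1)!}{n!}=\frac1n,
\]
independent of $i$, of $j$, and of the strategy profile. Since this event coincides with player $i$ obtaining the $j$th slot, the theorem follows.

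I expect the only real obstacle to be the justification of the \emph{independence} of the coordinates $v_i$, rather than the combinatorial counting. The subtle point is that a pseudorandom replacement for a player who fails the test is generated from the other reports $\hat\theta_{-i}$; one must argue, as in Lemma~\ref{final2}, that this replacement can be treated as uniform and independent of the coordinates that matter, so that the joint density factorizes as $f_1\cdots f_n$ with uniform marginals. Once independence and identical uniform marginals are in hand, exchangeability and the $1/n$ conclusion are immediate.
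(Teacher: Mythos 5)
Your proof is correct, but it reaches the conclusion by a different route than the paper. Both arguments rest on the same premise, stated in the paragraph preceding the theorem: whatever the strategies, each coordinate $v_i$ of the decision vector is an independent uniform sample on $[0,1]$ (either it is an honestly uniform report that passes the GoF test, or it is a uniform pseudorandom replacement). From there you invoke exchangeability: i.i.d.\ continuous samples are almost surely distinct, all $n!$ rank orderings are equally likely, and $(n-1)!$ of them place $v_i$ at rank $j$, giving $1/n$ with no calculus at all. The paper instead computes the conditional probability that $v_i$ is the $j$th largest given its value, namely $\binom{n-1}{j-1}(1-v_i)^{j-1}v_i^{n-j}$, and then integrates this Beta-type density over the uniform distribution of $v_i$ to obtain $1/n$. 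Your argument is shorter, more elementary, and strictly more general---it needs only exchangeability and continuity of the common distribution, not uniformity. The paper's heavier computation is not wasted, however: the same conditional rank density is reused immediately afterwards in Theorems~\ref{utility-honest} and~\ref{utility-dishonest} to compute the expected utilities of honest and dishonest players, which your symmetry argument does not yield. Finally, you are right to flag the independence of the pseudorandom replacements (generated from $\hat\theta_{-i}$) as the one delicate point; the paper asserts this independence without further argument, exactly as you do, so on that point the two proofs stand or fall together.
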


\begin{proof}
Since the elements of the vector $v$ are independent uniform samples as mentioned, the probability that the value $v_i$ is the $j$th largest in $v$ is
$\left( {\begin{array}{*{20}c} n-1 \\ j-1 \\ \end{array}} \right) (1-v_i)^{j-1}v_i^{n-j}$. Then the probability that a player $i$ has the $j$th largest value in $v$, 
given that $v_i$ takes values uniformly between 0 and 1 is
\begin{equation}
\int_0^1 \left( {\begin{array}{*{20}c} n-1 \\ j-1 \\ \end{array}} \right) (1-v_i)^{j-1}v_i^{n-j}d v_i
=\frac{1}{n}
\end{equation}
\end{proof}

Finally, we will analyze the expected utility of the players depending on their strategy. We start by presenting the expect utility of an honest player. Observe that the utility obtained is independent from the strategies of the rest of players, as proven in Lemma~\ref{final2}.

\begin{theorem}
\label{utility-honest}
The expected utility of an honest player $i$ is \\* $E[\hat{U}_i | \sigma_x, \forall x \neq i; i \ \mathrm{honest} ] = \frac{\sum_{j=1}^k w_j(n-j+1)}{n(n+1)}$. This value is independent from
the strategies $\sigma_x, \forall x \neq i$.
\end{theorem}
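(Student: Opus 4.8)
The plan is to build directly on the two results already established. By Lemma~\ref{final2}, the expected utility of player $i$ is independent of the strategies $\sigma_x$ of all players $x \neq i$, so I may assume every other player is honest; then all the entries $v_x$ of the decision vector $v$ are i.i.d.\ uniform on $[0,1]$, and since $i$ is honest we have $v_i = \theta_i$. Throughout I write $E[\hat U_i]$ as shorthand for the conditioned quantity $E[\hat U_i \mid \sigma_x, \forall x\neq i;\ i \ \mathrm{honest}]$.

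First I would decompose the utility. From the definition $u_i(d,\theta_i) = \theta_i \sum_{j=1}^k w_j \delta_{i d_j}$, player $i$ collects utility $\theta_i w_j$ exactly when she is assigned the $j$th slot, and $0$ when she receives no slot. Under \fedor, player $i$ gets the $j$th slot precisely when her value $v_i = \theta_i$ is the $j$th largest entry of $v$. Hence
$$E[\hat{U}_i] = \sum_{j=1}^k w_j \, E\big[\theta_i \cdot \mathbf{1}[\theta_i \text{ is the } j\text{th largest entry of } v]\big].$$

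Next I would reuse the order-statistic probability already computed in the proof of Theorem~\ref{fairness-goods}: conditioned on the value $v_i$, the probability that $v_i$ is the $j$th largest among the $n$ i.i.d.\ uniform entries is $\binom{n-1}{j-1}(1-v_i)^{j-1} v_i^{n-j}$. The only difference from the fairness computation is the extra weight $v_i = \theta_i$ inside the expectation, contributed by the utility. Integrating over $v_i$ uniform on $[0,1]$ gives
$$E\big[\theta_i \cdot \mathbf{1}[\,\cdots\,]\big] = \binom{n-1}{j-1}\int_0^1 v_i^{\,n-j+1}(1-v_i)^{j-1}\, dv_i.$$

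The last step is to evaluate this Beta integral. Using $\int_0^1 v^{a}(1-v)^{b}\,dv = \frac{a!\,b!}{(a+b+1)!}$ with $a = n-j+1$ and $b = j-1$, the integral equals $\frac{(n-j+1)!\,(j-1)!}{(n+1)!}$; multiplying by $\binom{n-1}{j-1} = \frac{(n-1)!}{(j-1)!(n-j)!}$ and cancelling $(j-1)!$, together with $(n-j+1)!/(n-j)! = n-j+1$, collapses the product to $\frac{n-j+1}{n(n+1)}$. Summing over $j$ from $1$ to $k$ yields the claimed value $\frac{\sum_{j=1}^k w_j(n-j+1)}{n(n+1)}$, and the independence from the $\sigma_x$ is exactly what Lemma~\ref{final2} supplies. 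There is no real obstacle here: the only subtlety worth flagging is that honesty is what forces $v_i = \theta_i$, so that the weighting factor inside the expectation is $v_i$ itself (rather than a constant, as in the fairness proof); everything else is a routine Beta-integral simplification.
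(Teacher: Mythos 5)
Your proof is correct and follows essentially the same route as the paper: both condition on the honest player's value $v_i=\theta_i$, use the order-statistic probability $\binom{n-1}{j-1}(1-\theta_i)^{j-1}\theta_i^{n-j}$ from the fairness theorem, and evaluate the resulting Beta integral $\sum_{j=1}^k w_j\binom{n-1}{j-1}\int_0^1 \theta_i^{\,n-j+1}(1-\theta_i)^{j-1}\,d\theta_i = \frac{\sum_{j=1}^k w_j(n-j+1)}{n(n+1)}$. Your explicit appeal to Lemma~\ref{final2} for the independence claim matches the paper's remark preceding the theorem, and your spelled-out factorial cancellation is just a more detailed version of the computation the paper leaves implicit.
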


\begin{proof}
Since $i$ is honest, she reports her true type $\theta_i$, which follows a uniform distribution. Hence, it holds that $v_i=\theta_i$. As in the proof of Theorem~\ref{fairness-goods}, the probability that the type $\theta_i$ is the $j$th largest value is $\left( {\begin{array}{*{20}c} n-1 \\ j-1 \\ \end{array}} \right) (1-\theta_i)^{j-1}\theta_i^{n-j}$. Then, the expected utility can be computed as
\begin{eqnarray}
\begin{aligned}
E[\hat{U}_i | \sigma_x, \forall x \neq i; i \ \mathrm{honest} ] \\
 =  \sum_{j=1}^k w_j \int_0^1 \theta_i \left( {\begin{array}{*{20}c} n-1 \\ j-1 \\ \end{array}} \right) (1-\theta_i)^{j-1}\theta_i^{n-j}d \theta_i \\
=  \frac{\sum_{i=1}^j w_j(n-j+1)}{n(n+1)}
\end{aligned}
\end{eqnarray}

\end{proof}

We now find the utility of a player $i$ that is not honest (a.k.a cheater), but she reports values that are not uniform (and hence the goodness-of-fit test always fails) 
or she reports types $\hat{\theta}_i$ that are uniform but independent of her true types $\theta_i$. 
Observe that this theorem does not consider the case when the types reported are uniform and somehow correlated with the real types (this is left to be shown experimentally).

\begin{theorem}
\label{utility-dishonest}
The expected utility of a dishonest player $i$ that reports non uniform types or types independent of her true normalized uniform distribution is 
$E[\hat{U}_i | \sigma_x, \forall x \in N ] = \frac{1}{2n} \sum_{j=1}^k w_j$.
\end{theorem}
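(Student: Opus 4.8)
The plan is to reduce the dishonest player's expected utility to a product of two independent factors: the expectation of her true type $\theta_i$ and the probability of winning each slot. The crucial observation is that, in both cheating scenarios covered by the statement, the value $v_i$ that the mechanism uses to rank player $i$ is a uniform sample on $[0,1]$ that is \emph{independent} of her true type $\theta_i$. Indeed, if $i$ reports non-uniform types, the GoF test fails and $\mathrm{pseudorandom}(\hat\theta_{-i})$ replaces her bid; this replacement is a uniform value derived from the other players' reports and hence carries no information about $\theta_i$. If instead $i$ reports uniform types that are independent of $\theta_i$, then $v_i=\hat\theta_i$, which is uniform and again independent of $\theta_i$. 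So in either case $v_i$ is independent of $\theta_i$.

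First I would recall, as established in the proof of Lemma~\ref{final2}, that every entry $v_x$ of the decision vector $v$ is uniformly distributed on $[0,1]$ and that the entries are mutually independent. Consequently, whether player $i$ receives the $j$th slot depends solely on the rank of $v_i$ within $v$, and by the computation in Theorem~\ref{fairness-goods} this rank is the $j$th largest with probability $1/n$ for each $j\in\{1,\dots,n\}$. In particular $i$ obtains the $j$th good exactly when $v_i$ is the $j$th largest, for $j\in\{1,\dots,k\}$.

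Next I would write out the utility. By the definition $u_i(d,\theta_i)=\theta_i\sum_{j=1}^k w_j\delta_{i d_j}$, her realized utility equals $\theta_i w_j$ when she wins slot $j$ and $0$ when she wins no good. Because $v_i$ (and indeed the entire vector $v$) is independent of $\theta_i$, the indicator that $i$ wins slot $j$ is independent of $\theta_i$, so the expectation factorizes:
\begin{equation}
E[\hat U_i \mid \sigma_x, \forall x \in N] = \sum_{j=1}^k w_j \, E[\theta_i] \, \Pr[v_i \text{ is the } j\text{th largest}] = \sum_{j=1}^k w_j \cdot \frac{1}{2} \cdot \frac{1}{n}.
\end{equation}
Using $E[\theta_i]=1/2$ (uniform on $[0,1]$) and $\Pr[v_i \text{ is the } j\text{th largest}]=1/n$ from Theorem~\ref{fairness-goods} yields $\frac{1}{2n}\sum_{j=1}^k w_j$, as claimed.

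I expect the only delicate step to be the independence argument of the first paragraph: one must verify case by case that the decision value $v_i$ carries no correlation with $\theta_i$, since it is precisely this independence that collapses the type-weighted sum into the scalar factor $E[\theta_i]=1/2$. This is exactly what distinguishes the dishonest player from the honest one of Theorem~\ref{utility-honest}, where $v_i=\theta_i$ forces the larger types to be paired with the heavier weights. The ranking probabilities themselves require no new work, being inherited directly from Theorem~\ref{fairness-goods}.
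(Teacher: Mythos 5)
Your proposal is correct and follows essentially the same route as the paper: both rest on the observation that in either cheating case the decision value $v_i$ is uniform on $[0,1]$ and independent of $\theta_i$, so the expectation factorizes into $E[\theta_i]=\tfrac{1}{2}$ times the slot-$j$ winning probability $\tfrac{1}{n}$ (which the paper writes as the explicit double integral $\int_0^1\theta_i\int_0^1\binom{n-1}{j-1}(1-v_i)^{j-1}v_i^{n-j}\,dv_i\,d\theta_i$). Your case-by-case justification of the independence is somewhat more explicit than the paper's one-line assertion, but the argument is the same.
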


\begin{proof}
In this case, the value $v_i$ used to distribute the slots follow a uniform distribution that is independent of the actual type $\theta_i$ of player $i$. Hence, 
\begin{eqnarray}
\begin{aligned}
E[\hat{U}_i | \sigma_x, \forall x \in N ] \\
= \sum_{j=1}^k w_j \int_0^1 \theta_i \int_0^1\left( {\begin{array}{*{20}c} n-1 \\ j-1 \\ \end{array}} \right) (1-v_i)^{j-1}v_i^{n-j}d v_i d \theta_i \\
= \frac{1}{2n} \sum_{j=1}^k w_j.
\end{aligned}
\end{eqnarray}

\end{proof}

\section*{Simulation Results}
\label{lb:experiments}

\added[]{\fedor is a mechanism for  distributing/allocating goods, without monetary incentives, in the presence of strategic agents. It guarantees an allocation of the goods that will be fair and socially efficient, given that the mechanism is truthful (strategy-proof). The good properties of the mechanism are feasible under a specific setting, discussed in Subsection~\nameref{s:model}.
Under that setting, we simulate \fedor in the scenario of sponsored search (presented in the Introduction section). Simulations compare \fedor with the VCG and GSP mechanisms in two cases: (1) only honest players are present, (2) dishonest players are present. Additionally, simulations examine the cost of not having a perfect GoF test and the length of the historical test that would yield an almost perfect GoF test.} 

\added[]{
\paragraph*{Presence of only honest players:} 
In this scenario we look at an instance where the players are following an honest strategy. This could happen either because the mechanism is strategy proof, which is the case of VCG and \fedor or because they just choose to follow an honest strategy. We consider the scenario of sponsored search and we compute the utilities of the seller and and the players in all three mechanisms. The utilities are computed in two distinct scenarios: (a) the presence of 9 players is assumed and the number of the available ad slots $k$ varies from $1$ to $8$; the weight for each slot values in decreasing order from $k$ to $1$. (b) the number of ad slots is fixed to $k=3$ and the number of players varies from $4$ to $9$. Additionally we assume that, for the purpose of experimentation, the GoF test used is perfect. 
The results obtained are presented in Fig~\ref{comparison}, where we plot the utility achieved by one of the players (all players are honest and follow a uniform distribution) against the utility of the seller. In the case of \fedor, we plot the utility values assuming that the same flat fee is paid by every player,
so that the utility of the seller is simply this flat fee multiplied by the number of players. The utility of a player, on the other hand, is the value she assigns to the slots
she gets (which is essentially the value given in Theorem~\ref{utility-honest}) minus the flat fee.
}

\added[]{
In Fig~\ref{comparison}(a) it can be observed that the utilities are the same for VCG (with externality) and GSP in the case of auctioning one slot. 
When the number of slots increases, the utility of the seller is greater with GSP than with VCG, while on the other hand the utility of the player 
is greater with VCG than with GSP. This behavior was expected since GSP is
a mechanism that favours the seller. Moreover, it is not strategy proof, something that would not put the player in an advantageous position if she is honest compared to VCG. 
 The utilities with the \fedor mechanism, on its hand, form a line that is a function of the flat fee players pay.
It is worth to observe that all the points that correspond to GSP and VCG utilities are on the \fedor line, which means that with the appropriate
value of the flat fee, \fedor can achieve the same seller and player utililites as GSP and VCG. Moreover,
the advantage of \fedor, as it is shown in Fig~\ref{comparison}(a), is that the flat fee provides a 
tradeoff between seller and player utilities, and allows to chose any point in the lines shown in the figure. Hence, \fedor has an adjustable performance that can be changed depending on the needs of the seller. In Fig~\ref{comparison}(b) we make similar observations and derive the same conclusions. Again the utilities of
GSP and VCG lie on the \fedor line, and the utilities with \fedor can be tuned with the flat fee.
In addition, we notice that when the number of players increases, their utility decreases in all three mechanisms. On the contrary, the utility of the seller increases in all three mechanisms when the number of player increases. 
}

\paragraph*{Presence of dishonest players: }
\deleted[]{In this section we present general results we extracted from simulating \fedor, VCG and GSP. We present different scenarios where \emph{cheating} (i.e., dishonest) players exist (while in the graphs presented in the previous section player were assumed to be honest). (The modeling of cheating behavior has been done with different
probability distributions.) Additionally, we no longer assume that the GoF test used is perfect, and evaluate the impact of this fact.   Under these assumptions, we show experimentally that \fedor is truthful and socially efficient. We also compare its performance in terms of utility with the other two mechanism, GSP and VCG.}
%
\added[]{In this section we assume that the players are dishonest set as our GoF test the Kolmogorov–Smirnov (KS) test and we evaluate its performance. Through simulations we examine the conditions under which the KS test performs optimally and we evaluate the impact of this optimality to our mechanism. Simulations show that  \fedor is a truthful mechanism. We compare the utility of \fedor with VCG and GSP in a number of scenarios where the cheating behavior of the players varies. Having as a base case the scenario where all players are honest, we compare the way in which the honest and dishonest players' utility is affected. Finally, we show how the social utility is affected by the presence of dishonest players in all three mechanisms considered.     
}

Unless otherwise stated, our results have been obtained by running $100$ experiments of $10000$ rounds (auctions) each. 
The number of participating players \deleted[]{has been nine and the number of 
goods/slots assumed has been $k=3$. The weights signed to the goods has been $w_1=3, w_2=2, w_3=1$.} \added[]{is $9$ and the number of slots is $k=3$. The weights assigned to the slots are $w_1=3, w_2=2, w_3=1$.}  Recall that honest players are considered the ones that reveal their true valuation for the set of goods. Cheater players are the ones declaring a different valuation from their actual valuation on the set of goods. All players' private valuations follow independent uniform distributions in the interval $[0,1]$.  

\added[]{To evaluate the impact of the presence of dishonest players in their utility, the utility of the honest players, and also in the social utility, we have simulated $10$ distinct scenarios. The scenarios we consider are as follows. Scenario $A$: $9$ honest players following the uniform distribution, $B$: $8$ honest players and one cheater with normal distribution ($\mu=0.5, \sigma=0.15$), $C$: $8$ honest players and one cheater with beta distribution with $\beta=0.9$, $D$: $8$ honest players and one cheater with beta distribution with $\beta=0.7$, $E$: $8$ honest players and one cheater with a random uniform distribution (different from its own distribution of values),  $F$: $6$ honest players and $3$ cheaters with random uniform distributions, 
$G$: $6$ honest players and $3$ cheaters with beta distributions ($\beta=0.9$), $H$: $6$ honest players and $3$ cheaters with beta distributions ($\beta=0.7$), $I$: $6$ honest players and $3$ cheaters with normal distributions ($\mu=0.5, \sigma=0.15$), $J$: $5$ honest players, $1$ cheater with random uniform distribution, $1$ cheater with beta distribution ($\beta=0.9$), $1$ cheater with beta distribution ($\beta=0.7$), and $1$ cheater with normal distribution ($\mu=0.5, \sigma=0.15$).}


\deleted[]{The results of Fig~\ref{comparison} are obtained assuming a perfect GoF test, for comparison reasons with the other mechanisms and following the analysis.
In order to examine what is the overhead that a non-perfect (real) GoF test can impose we have chosen as 
an appropriate GoF test the Kologorov-Smirnov (KS) test. Fig~\ref{ks-functioning} demonstrates that the KS test is a good test for our mechanism. 
In this figure we evaluate the impact of the history length in the accuracy of the test. The history is the number of values the test keeps as a reference to decide 
if the presented new value belongs to a uniform distribution or not.}
\added[]{
First we examine the performance of the KS test when it is used as the GoF test in \fedor. Fig~\ref{ks-functioning} evaluates the impact of the history length in the accuracy of the test. As it can be seen the KS test is a good fit for our mechanism. The history length refers to the number of values the test keeps as a reference to decide whether the new value reported by the player follows the uniform distribution. 
}
Notice that, in all three graphs of Fig~\ref{ks-functioning}, in the case  where the player has a uniform distribution, as the history length gets smaller, the number of false positives increases. In addition to that, we have checked what happens in the case when the player cheats with a distribution that may resemble (or not) the uniform distribution. We have chosen the beta distributions with different parameter and a normal distribution. As you can see from the plots 
in Fig~\ref{distr07} and Fig~\ref{distr09}, with a beta distribution and a large $\beta$ parameter, the player can approximate the uniform behavior, and thus it is difficult for the KS test to identify a cheater. Notice that in all three graphs of Fig~\ref{ks-functioning}, as the length of the history test decreases, the percentage of true positives decreases.
For different values on the upper bound threshold of the p.value (which decides the acceptance of a value based on the history), we have noticed that using a $p.value<0.1$ is a good balance between having large percentage of true positives and not having too many false positives. For example, for a history length of $1000$ rounds (which is a value that does not give a great overhead to the mechanism), in Fig~\ref{ks-functioning}(a), where $p.value<0.05$, the true positives are close to $75\%$; in Fig~\ref{ks-functioning}(b), where $p.value<0.1$, the true positives are close to $82\%$; while in Fig~\ref{ks-functioning}(c), where the $p.value<0.2$, the true positives are close to $95\%$ while the false positive are much higher than in the other scenarios (close to $20\%$). 
Hence, selecting $1000$ rounds as the history of the mechanism, provides a good performance and is large enough if we compare it with the $10000$ rounds that our scenarios execute as part of the online auction process. In the rest of the section we will run experiments with a history length of $1000$ and KS test with $p.value<0.1$.
Before actual auctions are run, the history buffer is filled with $1000$ values, so the results are not affected by history's transient states.

We investigate now the utility of the honest players and cheaters in a variety of different scenarios, while having the KS test as our GoF test. 
In this study we assume that the players pay no flat fee.
We have shown analytically (assuming a perfect GoF test) that the best strategy for each player is to be honest. Our experimental results come to assert this even in the case where the GoF is not perfect. From Fig~\ref{utility}(a), the utility of the honest player is greater than the cheater in all scenarios considered. Independently of the behavior of the rest of the players, the mean utility of the honest player is around the same value of $5400$. There is no cheating strategy that will give a higher utility. Especially in scenarios $C$ and $G$ where the player cheats with a beta (where parameter is $\beta=0.9$) indeed the cheaters increase their mean utility by roughly around $500$ units but still the mean utility of the honest player is around $1500$ units higher. This also proves the efficiency of the KS test for $p.value<0.1$ and history length $1000$. The horizontal (blue) line in Fig~\ref{utility}(a) represents the value of utility as it is calculated from the analytical part. Notice that the analytical value is quite close to the experimental with the single exception of the player cheating with beta, where $\beta=0.9$.

Going one step further now we compare the utility of players when \fedor is used with the cases where VCG and GSP are used. As Fig~\ref{utility} shows, the honest player has always a larger utility compared with the utility of the cheating player in all scenarios considered, and in all mechanisms. 
However, in some scenarios the distance between the utility of a cheater and an honest player is very small (see, e.g., scenario $H$ in Fig~\ref{utility}(b)). 
It is also interesting to notice that in \fedor in scenario $A$, where all players are honest, the utility of the player is maximized compared to the honest players in the rest of the scenarios. This is not the case for the other two mechanisms (see scenario I for both VCG and GSP).

The utility of an honest player with VCG and GSP can vary significantly depending on the scenario. In any case, it is always around $1100$ for VCG and $700$ for GSP.
Comparing with \fedor, the utility of an honest player with the latter is up to four times larger than those with VCG and GSP. This is due to the fact that in our experiments we assumed that the flat fee of the player is zero. As we showed though in Fig~\ref{comparison}, the flat fee is something that can be adjusted to match the desired utility of the player (or the seller).

Finally, from Fig~\ref{social-utility} (a) we see that the social utility is maximized in \fedor when all players are truthful (scenario $A$ compared with the rest of scenarios). This does not hold for the other two mechanisms. As you can see from  Fig~\ref{social-utility} (b) and (c)  scenarios $B$ and $I$ (where dishonest players cheat with normal distribution $\mu=0.5, \sigma=0.15$) have greater social utility than scenario $A$ where they are all honest. 
We can also notice that the social utility in \fedor is up to four times larger than those of GSP and VCG (but, of course, we are assuming no flat fee). 

\section*{Discussion}
\label{discussion}

\added[]{ Although powerful mechanisms (such as VCG) exist that, under monetary incentives, achieve the designed goals, cashing out payments is not always feasible. Maybe the system is distributed and no central authority exists that can guarantee the payments, or payments are too impractical. Another possibility is that the nature of the setting is such that payments are not allowed (i.e., a public common good). According to~\cite{arrow2012social, satterthwaite1975strategy} mechanisms without monetary incentives have limited capabilities. A case by case study is the only hope for finding mechanisms with good properties and without payments, like in~\cite{schmidinfocom, interdomain}. \fedor is a general mechanism that is truthful, fair and socially efficient, and without using monetary incentives. The inherent repeated nature of \fedor , though, makes it applicable only to settings with infinite or unknown number of repetitions. If the number of repetitions is known to the agents, then they can devise a strategy that would maximize their utility and potentially lead to a non-truthful mechanism. Our intuition is that, if the nature of the problem allows it,  mechanisms like \fedor can be devised for other problems as well. Taking advantage of the repeated interaction between the agents, and applying the concept of linking mechanisms, should allow to derive other mechanisms without payments for multiple combinations of desirable properties (i.e., different fairness criteria, strategy proofness, social efficiency, etc).} 
\added[]{An additional advantage, as we have seen experimentally, is that a mechanism without payments like \fedor was able, without incentivizing the workers through payments, to provide a way for the mechanism designer to select its utility and the utility of the participating players.  }

While \fedor is a mechanism with several interesting features, there are some issues that are worth discussing.
\fedor assumes that the valuations of the  players follow a uniform distribution. Although this might seem as a constraint, 
if the real bids do not follow such a distribution, they can be transformed by using the PIT transformation, as proposed by Santos et al.~\cite{QPQ}. \deleted[]{More over} \added[]{Moreover} we make the assumption that all players have i.i.d. valuations. As this is a first approach towards this line of mechanisms without monetary incentives we wanted to keep our model simple. In an extension of the mechanism, we plan to explore the correlated case where players distributions are correlated with those of other players~\cite{DBLP:conf/netys/SantosACF14}. 

An additional extension to the mechanism would be to allow players to provide their valuation for a particular good of the set and not for the whole set, as it is implemented now. This is yet another challenge, since players might have correlated preferences between certain goods in the same or different round.  

For the sponsored search scenario we have considered that all players pay the same flat fee. There could be the case that players with a higher budget would like to claim a larger number of slot over the consecutive rounds, something that \fedor does not allow to happen due to the fairness property. An approach to solve this constrain of \fedor would be to allow a player with a higher budget to participate with more than one identity. As a consequence, she will be increasing the amount of goods that she will receive, proportionally to the number of identities.  Again though, allowing players to have multiple identities one must assume that that players might have correlated preferences.

\noindent \paragraph{\bf Acknowledgments:}  
This work has been partially funded from research funds by the Regional Government of Madrid (CM) grant Cloud4BigData (S2013/ICE-2894) cofunded by FSE \& FEDER, the Spanish Ministry of Economy and Competitiveness grant HyperAdapt (TEC2014- 55713-
R), the Spanish Ministry of Education grant FPU2013-03792, the NSF of China
grant 61520106005 and partially funded by Real Colegio Complutense (RCC).

\newpage

\section*{Figures}


\begin{figure}[h!]
\includegraphics[width=40em]{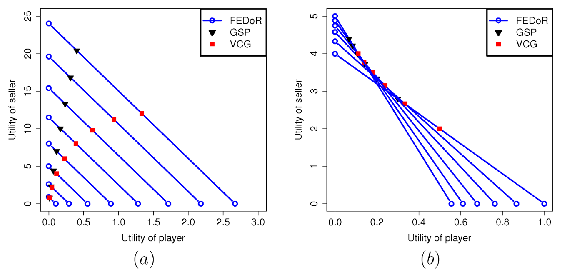}
\caption{{\bf Comparing \fedor with VCG and GSP in the mean utility of seller and player per auction.} 
The values for \fedor form a line in each experiment, since the utilities of seller and players can be tuned with the flat fee.
The values for VCG and GSP always lie on the corresponding \fedor line.
Plots represent the mean utility per auction of the seller and one player over $10000$ rounds of execution and $100$ experiments. (a) Scenarios with $9$ player and the number of slots increasing from $1$ (leftmost line) to $8$ (rightmost line). (b) Scenarios with $3$ slots and the players decreasing from $9$ (line with largest slope) to $4$ (line with smallest slope).}
\label{comparison}
\end{figure}

\begin{figure}[h!]
\includegraphics[width=40em]{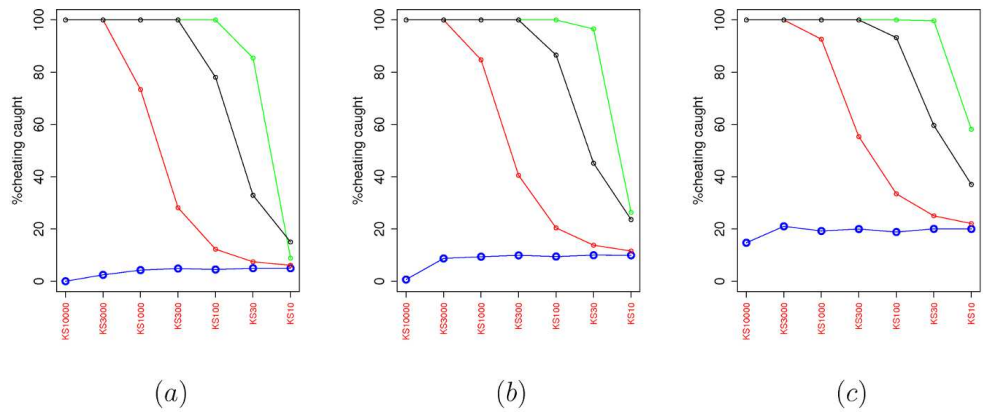}
\caption{{\bf The percentage of positives with the KS test as a function of the history length of the KS test.}
(a) The upper bound threshold of the p.value of the KS test is $0.05$, (b) the upper bound threshold of the p.value of the KS test is $0.1$, (c) the upper bound threshold of the p.value of the KS test is $0.2$.
From bottom to top the lines represent the behavior in a scenario with no cheaters, cheaters bidding values with a beta distribution where $\beta=0.9$, 
cheaters bidding values with a beta distribution where $\beta=0.7$, and cheaters bidding values with a normal distribution ($\mu=0.5, \sigma=0.15$).}
\label{ks-functioning}
\end{figure}

\begin{figure}[h!]
\includegraphics[width=40em]{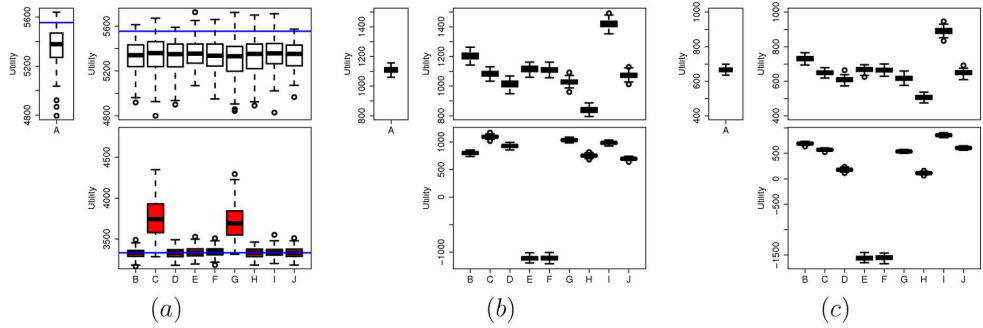}
\caption{{\bf The utility of an honest player compared to a cheating player in different scenarios.}
(a) Using the \fedor mechanism, for p.value$<0.1$ and history length $1000$ (the blue line represents the analytical utility values). (b) Using the VCG mechanism. (c) Using the GSP mechanism.
The left box for each mechanism (marked A) represents the distribution of the utility of an honest player in a scenario where all $9$ players are honest following a uniform. Right top box represents the utility of honest players that follow a uniform distribution in scenarios $B-J$. Left bottom box represents the utilities of cheating players with the distributions defined in scenarios $B-J$. Scenarios are as follows, $B$: $8$ honest players and one cheater with normal distribution ($\mu=0.5, \sigma=0.15$), $C$: $8$ honest players and one cheater with beta distribution with $\beta=0.9$, $D$: $8$ honest players and one cheater with beta distribution with $\beta=0.7$, $E$: $8$ honest players and one cheater with a random uniform distribution (different from its own distribution of values),  $F$: $6$ honest players and $3$ cheaters with random uniform distributions, 
$G$: $6$ honest players and $3$ cheaters with beta distributions ($\beta=0.9$), $H$: $6$ honest players and $3$ cheaters with beta distributions ($\beta=0.7$), $I$: $6$ honest players and $3$ cheaters with normal distributions ($\mu=0.5, \sigma=0.15$), $J$: $5$ honest players, $1$ cheater with random uniform distribution, $1$ cheater with beta distribution ($\beta=0.9$), $1$ cheater with beta distribution ($\beta=0.7$), and $1$ cheater with normal distribution ($\mu=0.5, \sigma=0.15$), which is the one represented in the plot.}
\label{utility}
\end{figure}

\begin{figure}[h!]
\includegraphics[width=40em]{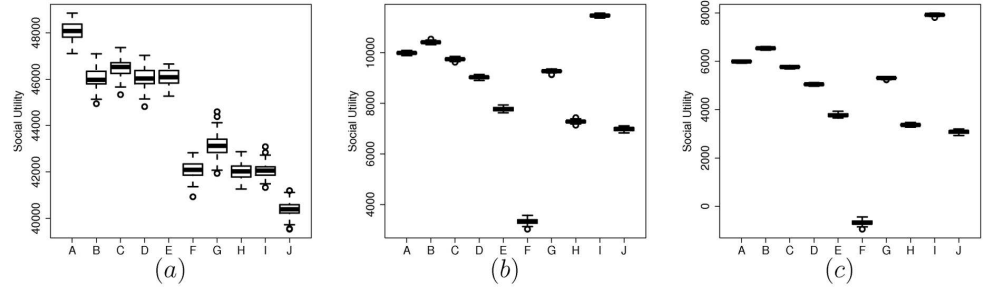}
\caption{{\bf The social utility in different scenarios.}
(a) Using the \fedor mechanism, for p.value<0.1 and history lenght $1000$ (b) Using the VCG mechanism. (c) Using the GSP mechanism.
The scenarios $A$ to $J$ are the same as in Fig~\ref{utility}.}
\label{social-utility}
\end{figure}

\begin{figure}[h!]
\includegraphics[width=40em]{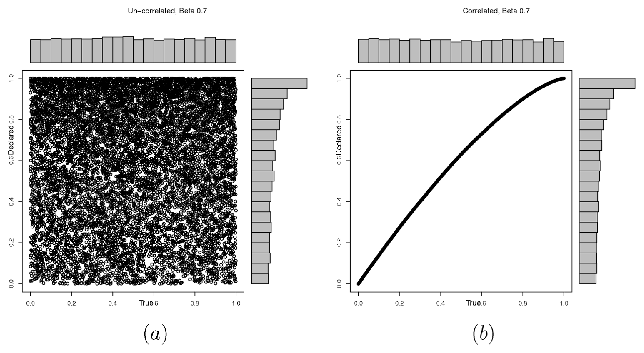}
\caption{{\bf Different strategies, modelled as a $\beta=0.7$ probability distribution.}(a) The declared values are not correlated with the true values (following a uniform distribution). (b)  The declared values are correlated with the true values (following a uniform distribution). Note that the marginal distribution represented in the x-axis corresponds to the declared values of the agent, while the y-axis represents the true values of the agent.} 
\label{distr07}
\end{figure}

\begin{figure}[h!]
\includegraphics[width=40em]{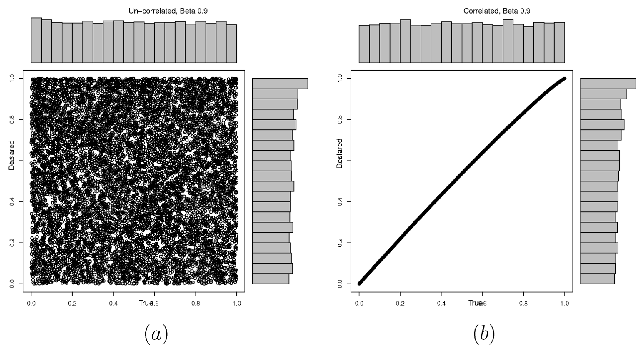}
\caption{{\bf Different strategies, modelled as a $\beta=0.9$ probability distribution.}(a) The declared values are not correlated with the true values (following a uniform distribution). (b)  The declared values are correlated with the true values (following a uniform distribution). Note that the marginal distribution represented in the x-axis corresponds to the declared values of the agent, while the y-axis represents the true values of the agent. }
\label{distr09}
\end{figure}

\clearpage
\newpage

\section*{Algorithms}

\lstset{columns=fullflexible,tabsize=3,basicstyle=\small,identifierstyle=\rmfamily\textit,
mathescape=true,morekeywords={process,const,when,elsif,procedure,null,function,do,return,
foreach,begin,var,if,then,else,rcv,rx_user_events,rx_network_events,send,upstream,downstream,
while,do,forward,backward,upon,wait,accept,audit,get},literate={:=}{{$\leftarrow$ }}1{->}{{ $\rightarrow$ }}1}
\lstset{escapeinside={('}{')}}

\begin{table}[H]
\centering

\begin{algorithm}[H]
\caption{\fedor Algorithm (code for player $i$)}
\begin{lstlisting}
foreach $k$ set of goods do
	Observe type $\theta_i$
	Choose type $\hat{\theta}_i$ using probability distribution 				$\sigma_i(\cdot|\theta_i)$ (player $i$'s strategy)
	Broadcast the type $\hat{\theta}_i$ (this is the bid of player $i$)
	Wait to receive the reported types $\hat{\theta}$ from all the players in $N$ 
	$d \leftarrow g_f(\hat{\theta})$              		\\ applying the FEDoR mechanism 
	if $(i=d_j)$ then player $i$ receives the $j$th slot
\end{lstlisting}
\label{alg3}
\end{algorithm} 
\end{table}

%
%
%

\end{document}